\def\BibTeX{{\rm B\kern-.05em{\sc i\kern-.025em b}\kern-.08em
    T\kern-.1667em\lower.7ex\hbox{E}\kern-.125emX}}
\theoremstyle{definition}
\newtheorem{definition}{Definition}[section]
\newtheoremstyle{theorem}
	{}
	{}
	{\itshape}
	{}
	{\bfseries}
	{.}
	{ }
	{}
\theoremstyle{theorem}
\newtheorem{thm}{Theorem}
\newtheorem{prop}[thm]{Proposition}
\newcommand{\coloneq}{\mathrel{{\colon}{=}}}
\DeclareMathOperator{\sgn}{sgn}
\begin{document}

\title{Mapping of Internet ``Coastlines'' via Large Scale Anonymized Network Source Correlations
\thanks{DISTRIBUTION STATEMENT A. Approved for public release. Distribution is unlimited. 
This material is based upon work supported by the Under Secretary of Defense for Research and Engineering under Air Force Contract No. FA8702-15-D-0001. Any opinions, findings, conclusions or recommendations expressed in this material are those of the author(s) and do not necessarily reflect the views of the Under Secretary of Defense for Research and Engineering. Research was also sponsored by the United States Air Force Research Laboratory and the Department of the Air Force Artificial Intelligence Accelerator and was accomplished under Cooperative Agreement Number FA8750-19-2-1000. The views and conclusions contained in this document are those of the authors and should not be interpreted as representing the official policies, either expressed or implied, of the Department of the Air Force or the U.S. Government.
\copyright\ 2023 Massachusetts Institute of Technology. Delivered to the U.S. Government with Unlimited Rights, as defined in DFARS Part 252.227-7013 or 7014 (Feb 2014). Notwithstanding any copyright notice, U.S. Government rights in this work are defined by DFARS 252.227-7013 or DFARS 252.227-7014 as detailed above. Use of this work other than as specifically authorized by the U.S. Government may violate any copyrights that exist in this work.}
}

\author{\IEEEauthorblockN{Hayden Jananthan$^1$, Jeremy Kepner$^1$, Michael Jones$^1$, William Arcand$^1$, David Bestor$^1$, William Bergeron$^1$, \\ Chansup Byun$^1$, Timothy Davis$^2$, Vijay Gadepally$^1$,  Daniel Grant$^3$, Michael Houle$^1$, Matthew Hubbell$^1$, \\  Anna Klein$^1$, Lauren Milechin$^1$, Guillermo Morales$^1$, Andrew Morris$^3$, Julie Mullen$^1$, Ritesh Patel$^1$, \\ Alex Pentland$^1$, Sandeep Pisharody$^1$,  Andrew Prout$^1$,  Albert Reuther$^1$, Antonio Rosa$^1$, Siddharth Samsi$^1$, \\ Tyler Trigg$^1$, Gabriel Wachman$^1$, Charles Yee$^1$, Peter Michaleas$^1$
\\
\IEEEauthorblockA{$^1$MIT, $^2$Texas A\&M, $^3$GreyNoise
}}}

\maketitle

\begin{abstract}
	Expanding the scientific tools available to protect computer networks can be aided by a deeper understanding of the underlying statistical distributions of network traffic and their potential geometric interpretations. Analyses of large scale network observations provide a unique window into studying those underlying statistics. Newly developed GraphBLAS hypersparse matrices and D4M associative array technologies enable the efficient anonymized analysis of network traffic on the scale of trillions of events. This work analyzes over 100,000,000,000 anonymized packets from the largest Internet telescope (CAIDA) and over 10,000,000 anonymized sources from the largest commercial honeyfarm (GreyNoise). Neither CAIDA nor GreyNoise actively emit Internet traffic and provide distinct observations of unsolicited Internet traffic (primarily botnets and scanners). Analysis of these observations confirms the previously observed Cauchy-like distributions describing temporal correlations between Internet sources. The Gull lighthouse problem is a well-known geometric characterization of the standard Cauchy distribution and motivates a potential geometric interpretation for Internet observations. This work generalizes  the Gull lighthouse problem to accommodate larger classes of coastlines, deriving a closed-form solution for the resulting probability distributions, stating and examining the inverse problem of identifying an appropriate coastline given a continuous probability distribution, identifying a geometric heuristic for solving this problem computationally, and applying that heuristic to examine the temporal geometry of different subsets of network observations. Application of this method to the CAIDA and GreyNoise data reveals a several orders of magnitude difference between known benign and other traffic which can lead to potentially novel ways to protect networks.
\end{abstract}

\begin{IEEEkeywords}
heavy-tailed distribution, cauchy distribution, cybersecurity, statistics, internet modeling
\end{IEEEkeywords}

\section{Introduction}

As the role of the Internet in day-to-day activities within modern civilization continually deepens its roots in existing contexts and spreads its roots even further, the necessity for---and complexity within---cybersecurity and cyber-defense is correspondingly made ever more apparent \cite{cisco2022cisco, caida2019anonymized, kepner2022zero, claffy2000measuring, li2023survey, rabinovich2016measuring, claffy2020workshop}. An important set of tools for supporting cyber-defense are Internet measurements. Enabling some of those measurements are internet observatories and outposts such as the Center for Applied Internet Data Analysis (CAIDA) Telescope \cite{caida2023ucsd} and the GreyNoise honeyfarm \cite{greynoise2023greynoise}. The CAIDA Telescope monitors a continuous stream of packets from an unsolicited darkspace making up approximately 1/256 of the Internet. CAIDA Telescope traffic is dominated by likely malicious traffic (e.g., vulnerability scanners and backscatter from denial-of-service attacks). The GreyNoise honeyfarm engages in active conversations with potentially malicious traffic in order to associate various meta-tags to those sources \cite{greynoise2023greynoise}. These network sensors generate trillions of events where ensuring privacy is a paramount consideration.  Newly developed GraphBLAS hypersparse matrices and D4M associative array technologies enable the efficient anonymized analysis of these data at this scale and have revealed a variety of phenomena \cite{kepner16mathematical, buluc17design, kepner2018mathematics, davis2019algorithm}.

Along with packet information and other raw measurements, aggregate measurements can be computed and include the computation of aggregate network-theoretic quantities like source fan-out, destination fan-in, the degree distribution of sources and destinations, and the numbers of packets sent from sources to destinations. These aggregate network-theoretic quantities often follow heavy-tailed probability distributions given by the modified Zipf-Mandelbrot model \cite{jones2022graphblas, kawaminami2022large}. The ubiquity of such heavy-tailed distributions across fields---financial \cite{pareto1964cours, gabaix1999zipf, anderson2006long, hackett196770, cont2001empirical, loretan1994testing}, natural language analysis \cite{estoup1916gammes, zipf1949human}, and network analysis \cite{broder2000graph, delvin2021hybrid, huberman1999growth, crovella1998heavy, mahanti2013tale, faloutsos1999on}---and the fact that many traditional light-tailed statistical techniques like the use of finite-order moments and the Central Limit Theorem fail badly in the heavy-tailed case \cite{nair2022fundamentals, poisson1829suite, poisson1827probabilite, demirandacardoso2021graphical, cherapanamjeri2020algorithms, stehlik2010favorable, benaych-georges2014central} has underlined the importance of exploring the underlying statistical distributions describing the Internet.

\begin{figure*}[htbp]
	\begin{center}
		\includegraphics[width=2\columnwidth]{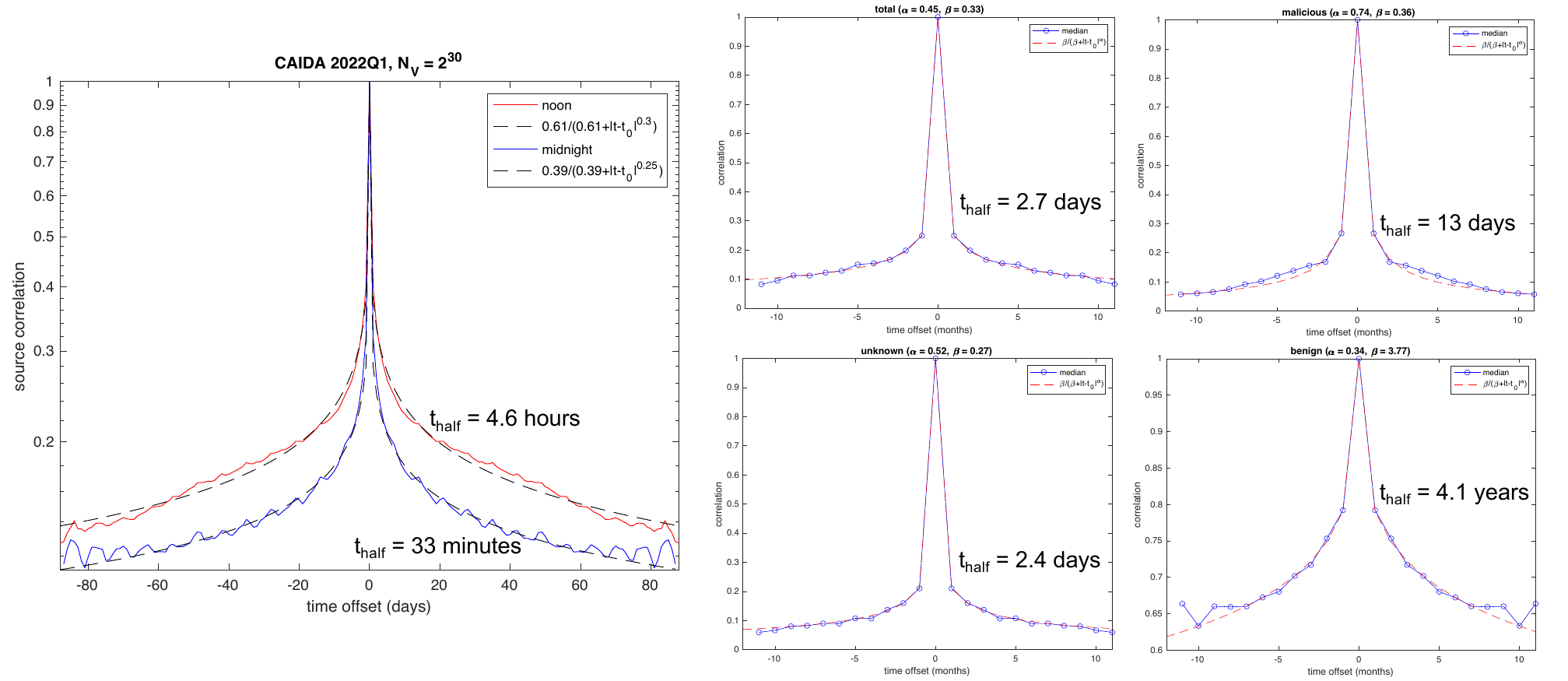}
	\end{center}
	\caption{(left) Source self-correlations among sources observed by the CAIDA darknet telescope during 2022Q1 each point represent the sources drawn a packet window with $N_V = 2^{30}$ valid packets.  (right) Self-correlations among different categories of sources (total, malicious, unknown, and benign) in the GreyNoise honeyfarm from 2021Q2 thru 2022Q1. Corresponding modified Cauchy parameters and full-width-half-maximum time $t_{\rm half} = \beta^{1/\alpha}$ are shown illustrating the significant difference between benign and other traffic.}
\label{self correlation by classification total figure}
\end{figure*}

In additional to localized statistics, long-term temporal correlations can be computed between internet observatories and outposts, such as what fraction of sources from one internet observatory's given collection period are seen by another internet observatory as a function of time before and after the first observatory's initial collection period. Prior works \cite{kepner2022temporal} examine these quantities in detail and find that they may be described via a modified Cauchy distribution of the form
\begin{equation*}
	p(x) \propto \frac{\beta}{\beta + x^\alpha}
\end{equation*}
for some values of $\alpha, \beta$. When $\alpha = 2$, this becomes the well-known Cauchy distribution.  A geometric interpretation for the Cauchy distribution is phrased in terms of lighthouses and coastlines. The empirical similarity between the models found in \cite{kepner2022temporal} and the Cauchy distribution---as well as the potential similarities between the function of lighthouses and the function of internet observatories/outposts---make it conceivable that a similar geometric interpretation could be made for the aforementioned modified Cauchy distributions; shedding light on the underlying geometry of the Internet.

The outline of the remainder of the paper is as follows.  First, \S \ref{modified cauchy distribution section} presents new GraphBLAS and D4M enabled correlation analysis of recent large-scale CAIDA Telescope and GreyNoise honeyfarm observations that affirm the previously observed modified Cauchy distributions. Next, \S \ref{background section} briefly describes the definition and some relevant properties of the Cauchy distribution and presents a derivation of the Cauchy distribution based on the Gull lighthouse problem.  In \S \ref{from coastlines to probability distributions section} we define the probability density function for the distribution of $x$-coordinates of points on a sufficiently nice curve $y = f(x)$ given uniformly random azimuths from a fixed point not on the curve. In \S \ref{from probability distributions to coastlines section} we examine the inverse problem---given a continuous probability distribution determining an appropriate coastline function that gives rise to the original distribution in the manner described in \S \ref{from coastlines to probability distributions section}. Finally, \S \ref{conclusion and future work section} recaps the major contributions and ideas discussed in this paper and future directions. Appendices examine the mathematical derivations and subtleties in more detail.  For the remainder of the paper, we assume all random variables $X$ are continuous random variables supported on some real interval, e.g., $\mathbb{R}$, $[0, \infty)$, etc. Given a random variable $X$, $p_X$ denotes the probability density function, and $\overline{F}_X$ denotes the complementary cumulative distribution function.

\section{Large-Scale Temporal Correlations} \label{modified cauchy distribution section}

\begin{figure*}[htbp]
	\begin{center}
		\includegraphics[width=1.7\columnwidth]{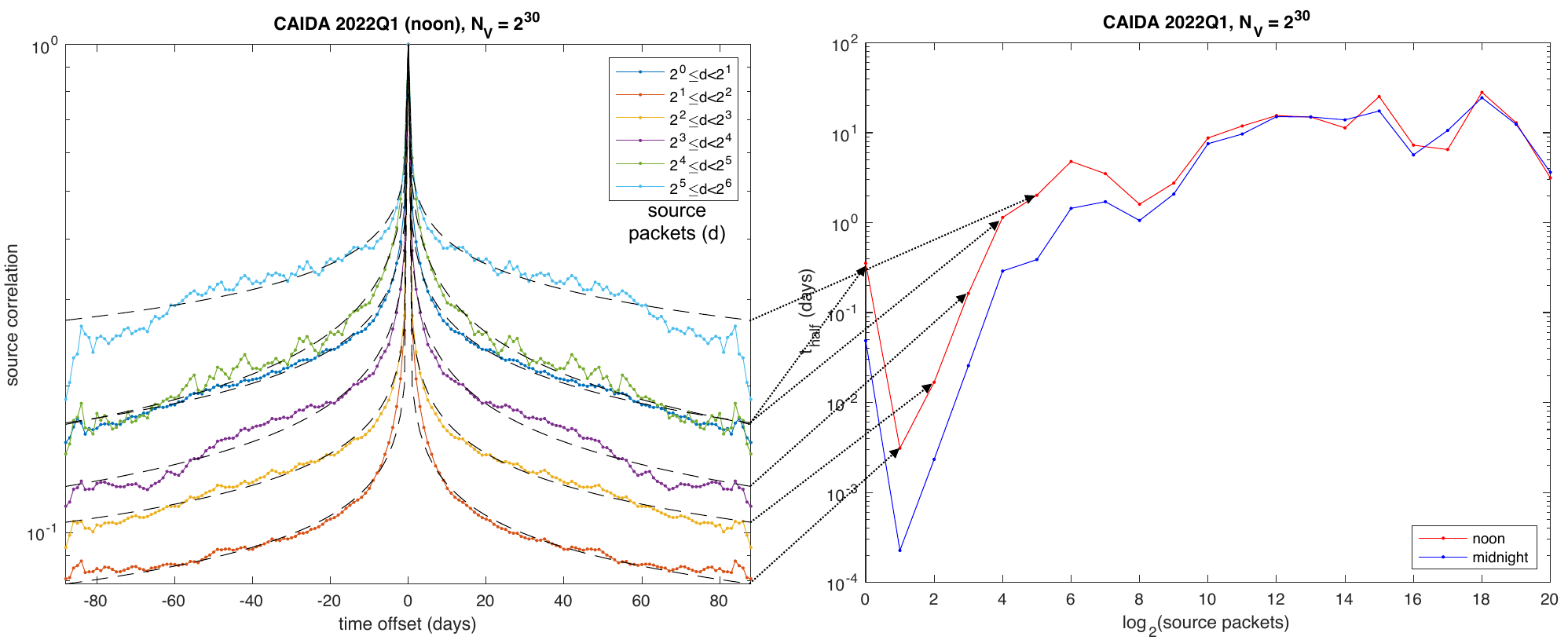}
	\end{center}
	\caption{(left) CAIDA noon self-correlations amongst sources with $2^i \leq d < 2^{i+1}$ packets over the range $0 \leq i \leq 5$ along with corresponding best-fit modified Cauchy distribution. (right) CAIDA noon and midnight full-width-half-maximum time $t_{\rm half}$ over the source packet ranges corresponding to $0 \leq i \leq 20$. The correlation time is typically days, with a notable exception in the range  $2^1 \leq d < 2^2$ which has a correlation time of a few minutes.} \label{self correlation by degree figure}
\end{figure*}

The CAIDA darknet telescope is a significant portion of a globally routed /8 network carrying essentially no legitimate traffic since it is an Internet darkspace, providing an ideal vantage point by which to observe and study unsolicited anomalous traffic. The GreyNoise honeyfarm is made up of thousands of servers carrying out conversations with sources scanning the Internet; based on these conversations GreyNoise can associate various metadata with those sources to collectively build a refined picture of the malicious sources regularly scanning the Internet and the techniques they employ. 

CAIDA collects over 1,000,000,000,000 unique packets each month from hundreds of millions of unique sources, while GreyNoise converses with several millions of unique sources each month. The sheer amount of data requires advanced technology to analyze \cite{lumsdaine2007challenges, kolda2009tensor, hilbert2011world}, making essential use of the MIT SuperCloud supercomputing center as well as massively parallel GraphBLAS and D4M hierarchical hypersparse matrices. The latter technologies have made it possible to analyze hundreds of billions of packets in minutes 
\cite{jones2022graphblas, kepner2009parallel, kepner2011graph, kepner2018mathematics}.
These prior analyses have revealed consistent scaling relations across both the space- and time-domains \cite{jones2022graphblas, kepner2022temporal, kawaminami2022large, delvin2021hybrid, kepner2022new}. 

Among these analyses is previous work examining and modeling the temporal correlations between sources seen by both CAIDA and GreyNoise \cite{kepner2022temporal}.  Specifically, the cross-correlation of CAIDA sources with varying numbers of packets $2^{i} \leq d < 2^{i+1}$  with GreyNoise data over a 15 month span were computed.
The source cross-correlations were fit to Gaussian, Cauchy, and modified Cauchy distributions, with the data being well-approximated by the modified Cauchy distribution with exponent $\alpha \approx 3/4$.

\begin{definition}[modified Cauchy distribution]
	A modified Cauchy distribution is any probability distribution with probability density function
	\begin{equation*}
		p(x) \propto \frac{\beta}{\beta + |x|^\alpha}
	\end{equation*}
	where $\alpha, \beta > 0$, denoted $\mathrm{ModCauchy}(\beta, \alpha)$.
\end{definition}

\noindent When $\alpha \leq 1$, the integral $\int_{-\infty}^\infty{\frac{\beta}{\beta + x^\alpha}dx}$ diverges, necessitating restriction of the distribution to a bounded domain. While being supported on a bounded domain means that the notion of being heavy-tailed does not apply, the fact that the Cauchy distribution is heavy-tailed and ratio of the Cauchy distribution to the modified Cauchy distribution
\begin{equation*}
	\lim_{x \to \pm\infty}{\frac{\frac{\beta}{\beta + |x|^\alpha}}{\frac{1}{1 + x^2}}} = \lim_{x \to \pm\infty}{\frac{\beta (1 + x^2)}{\beta + |x|^\alpha}} = \infty
\end{equation*}
whenever $0 < \alpha < 2$ suggests that the modified Cauchy distribution acts like a heavy-tailed distribution.
The full-width-half-maximum $x_{\rm half}$ of the modified Cauchy distribution is given by
	\begin{equation*}
		\frac{1}{2} = \frac{\beta}{\beta + |x_{\rm half}|^\alpha}
	\end{equation*}
resulting in $x_{\rm half} = \beta^{1/\alpha}$ and is a useful measure of correlation decay rate.

While \cite{kepner2022temporal} examined the temporal correlations between sources seen by two \emph{different} internet observatories, the applicability of the modified Cauchy distributions can be seen to extend to the self-correlations of sources within recent CAIDA and GreyNoise data sets.

Figure~\ref{self correlation by classification total figure} (left) shows self-correlations among anonymized sources observed at noon or midnight by the CAIDA  telescope in 2022Q1.  Each data point represents the sources drawn from a valid packet window with $N_V = 2^{30}$ CAIDA  packets collected over a few minutes.  This sample represents over 100,000,000,000 packets in total.  Such volumes of data are ideal for GraphBLAS hierarchical hypersparse matrices to process in a timely manner \cite{jones2022graphblas}.

Figure~\ref{self correlation by classification total figure} (right) shows self-correlations among different categories of anonymized sources (total, malicious, unknown, and benign) in the GreyNoise honeyfarm from 2021Q2 thru 2022Q1.   The string format and smaller scale of the GreyNoise sources ($\approx$10,000,000) are well-suited for D4M hierarchical associative arrays \cite{jones2022graphblas}.

Both Figure~\ref{self correlation by classification total figure} (left) and (right) show the parameters for the modified Cauchy distributions which closely model the self-correlation plots, producing modified Cauchy distributions with exponents $\alpha_\mathrm{total} \approx 0.45$, $\alpha_\mathrm{noon} \approx 0.3$, $\alpha_\mathrm{midnight} \approx 0.25$, $\alpha_\mathrm{unknown} \approx 0.52$, $\alpha_\mathrm{malicious} \approx 0.74$, and $\alpha_\mathrm{benign} \approx 0.34$.  Similarly shown are the corresponding full-width-half-maximum time $t_{\rm half}$ for each distribution and illustrates the significant difference in correlation time between benign traffic (years) and other traffic (hours to days).

The CAIDA data has sufficient volume to be able to further look at the self-correlations within ranges of packets.  Figure~\ref{self correlation by degree figure} (left) shows the CAIDA self-correlations amongst sources with $2^i \leq d < 2^{i+1}$ packets over the range $0 \leq i \leq 5$ along with corresponding best-fit modified Cauchy distributions, further affirming the model.    Figure~\ref{self correlation by degree figure} (right) shows the corresponding full-width-half-maximum time $t_{\rm half}$ over the source packet ranges corresponding to $0 \leq i \leq 20$, illustrating that the correlation time is typically days, with a notable exception in the range  $2^1 \leq d < 2^2$, which has a correlation time of a few minutes.

\section{Cauchy Distribution---Properties \& Geometry} \label{background section}

The standard Cauchy distribution has a geometric interpretation often phrased in terms of lighthouses and coastlines, which may offer geometric insight into Internet data.  First, recall the definition of the Cauchy distribution.

\begin{definition}[Cauchy distribution \cite{nair2022fundamentals}]
	The Cauchy distribution is the continuous probability distribution with probability density function
	\begin{equation*}
		p(x) = \frac{1}{\pi (1 + x^2)}
	\end{equation*}
\end{definition}

\noindent The Cauchy distribution is often used as a canonical example of a ``pathological'' continuous distribution, having no finite moments of order greater than or equal to $1$ \cite{nair2022fundamentals, poisson1827probabilite, poisson1829suite}. Additionally, the Cauchy distribution is a heavy-tailed distribution.

\begin{definition}[heavy-tailed \cite{nair2022fundamentals}]
	A random variable $X$ is \emph{right heavy-tailed} if its support contains $[M, \infty)$ for some $M \in \mathbb{R}$ and its complementary cumulative distribution function $\overline{F}_X$ satisfies, for any $\mu > 0$,
	\begin{equation*}
		\limsup_{x \to \infty}{\frac{\overline{F}_X(x)}{e^{-\mu x}}} = \infty
	\end{equation*}
	$X$ is \emph{left heavy-tailed} if $-X$ is right heavy-tailed. $X$ is \emph{heavy-tailed} if it is either left or right heavy-tailed. A statistical distribution is \emph{heavy-tailed} if any (equivalently, every) random variable with that distribution is heavy-tailed.
\end{definition}

\begin{prop}[\cite{nair2022fundamentals}]
	The Cauchy distribution is both left and right heavy-tailed.
\end{prop}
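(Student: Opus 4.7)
The plan is to compute $\overline{F}_X$ in closed form, extract its polynomial tail behavior, and then observe that any polynomially decaying tail dwarfs every exponential tail. Since the Cauchy density $p(x) = 1/[\pi(1+x^2)]$ is symmetric about $0$, I only need to establish right heavy-tailedness; left heavy-tailedness will follow from the fact that if $X$ is Cauchy then so is $-X$, so $-X$ is right heavy-tailed and $X$ is left heavy-tailed by definition.

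First I would compute
\begin{equation*}
	\overline{F}_X(x) = \int_x^\infty \frac{dt}{\pi(1+t^2)} = \frac{1}{2} - \frac{\arctan(x)}{\pi}
\end{equation*}
for all $x \in \mathbb{R}$, using the standard antiderivative of $1/(1+t^2)$. The support is all of $\mathbb{R}$, so in particular it contains $[M,\infty)$ for any $M$, satisfying the first clause of the heavy-tailed definition.

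Next I would extract the asymptotics of $\overline{F}_X(x)$ as $x \to \infty$. The cleanest route is to use the identity $\arctan(x) + \arctan(1/x) = \pi/2$ for $x > 0$, which gives $\overline{F}_X(x) = \arctan(1/x)/\pi$, together with $\arctan(u) \geq u/2$ for $u \in (0,1]$ to conclude $\overline{F}_X(x) \geq 1/(2\pi x)$ for all $x \geq 1$. (A direct lower bound $\int_x^{2x} dt/[\pi(1+t^2)] \geq x/[\pi(1+4x^2)]$ would work equally well and avoids any series expansion.)

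Finally, for any fixed $\mu > 0$,
\begin{equation*}
	\frac{\overline{F}_X(x)}{e^{-\mu x}} \geq \frac{e^{\mu x}}{2\pi x} \longrightarrow \infty \quad \text{as } x \to \infty,
\end{equation*}
so the $\limsup$ is $\infty$, establishing right heavy-tailedness; the symmetry argument above then finishes the proof. I do not expect a genuine obstacle here: the only mildly delicate point is making the $1/x$ lower bound on $\overline{F}_X$ rigorous without appealing to an unjustified asymptotic expansion, and the two options sketched above both handle this in one line.
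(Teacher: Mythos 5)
Your proof is correct, and it takes a genuinely different route from the paper's at the key step. Both arguments start from the same closed form $\overline{F}_X(x) = \frac{1}{2} - \frac{1}{\pi}\arctan(x)$, but the paper treats $\overline{F}_X(x)/e^{-\mu x}$ as a $0/0$ indeterminate form and applies l'Hopital's Rule, reducing the question to $\lim_{x \to \infty} \frac{1}{\pi\mu}\frac{e^{\mu x}}{1+x^2} = \infty$; you instead extract an explicit polynomial lower bound $\overline{F}_X(x) \geq \frac{1}{2\pi x}$ for $x \geq 1$ (via the identity $\arctan(x) + \arctan(1/x) = \pi/2$ and the concavity bound $\arctan(u) \geq u/2$ on $(0,1]$) and then invoke exponential-beats-polynomial directly. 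Your version is slightly more elementary --- no differentiation of the CCDF, no l'Hopital hypotheses to check --- and it yields quantitative tail information (the Cauchy tail decays like $1/x$) that the l'Hopital computation obscures; the paper's version, in exchange, avoids needing any inequality for $\arctan$ and is mechanical. Your treatment of the left tail is also cleaner: where the paper declares the left-tailed case ``analogous,'' you note that $-X$ has the same distribution as $X$ by symmetry of the density, so left heavy-tailedness follows formally from right heavy-tailedness and the definition, with no second computation. Both proofs ultimately rest on the same fact that $e^{\mu x}$ dominates any polynomial, so the difference is one of packaging, but yours is self-contained and a bit more informative.
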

\begin{proof}
	We use l'Hopital's Rule to show that $\lim_{x \to \infty}{e^{\mu x}\left(\frac{1}{2} - \frac{1}{\pi}\arctan(x)\right)} = \infty$, based on the observation that $\lim_{x \to \infty}{\left(\frac{1}{2} - \frac{1}{\pi}\arctan(x)\right)} = \lim_{x \to \infty}{e^{-\mu x}} = 0$. Indeed,
	\begin{align*}
		\lim_{x \to \infty}{\frac{\frac{d}{dx}\left(\frac{1}{2} - \frac{1}{\pi}\arctan(x)\right)}{\frac{d}{dx}\left(e^{-\mu x}\right)}} & = \lim_{x \to \infty}{\frac{-\frac{1}{\pi} \frac{1}{1+x^2}}{-\mu e^{-\mu x}}} \\
		& = \lim_{x \to \infty}{\frac{1}{\pi \mu} \frac{e^{\mu x}}{1 + x^2}} \\
		& = \infty
	\end{align*}
	Showing the Cauchy distribution is left heavy-tailed is analogous.
\end{proof}

A traditional geometric interpretation of the Cauchy distribution is by way of the Gull lighthouse problem (see Figure~\ref{GullLighthouseProblem}) and is stated as follows

\begin{quote}
\cite{gull1988bayesian}
A lighthouse is somewhere off a piece of straight coastline at position $x_0$ along the coast and a distance $y$ out to sea. It emits a series of short, highly collimated flashes at random intervals and hence at random azimuths [$\theta$]. These pulses are intercepted on the coast by photo-detectors that record only the fact that a flash has occurred, but not the azimuth from which it came. $N$ flashes have so far been recorded at positions $x_i$ ($i = 1, 2, \ldots, N$). Where is the lighthouse?
\end{quote}

While the original problem is concerned with finding the location of the lighthouse given a finite sequence of recorded positions, the Bayesian approach inverts the problem by first determining what the probability distribution of coastline positions is given the location of the lighthouse has been fixed---this distribution is our Cauchy distribution. Carrying out this analysis, the relation between the azimuth $\theta$ and the corresponding coastline position $x$ is $x = x_0 + y \tan(\theta)$ or, alternatively, $\theta = \arctan\left(\frac{x - x_0}{y}\right)$, as seen in Figure~\ref{cauchy setup figure}. Let $X$ and $\Theta$ be the random variables for coastline position and azimuth, respectively. Making use of the probability transformation rule, we then have
\begin{align*}
	p_X(x) & = p_\Theta(\theta) \left|\frac{d\theta}{dx}\right|
	= \frac{1}{\pi} \frac{y}{y^2 + (x - x_0)^2}
\end{align*}
When $x_0 = 0$ and $y = 1$, we get the familiar Cauchy distribution.  For the remainder of the paper we assume $x_0 = 0$ for simplicity.

\begin{figure}[htbp]
	\begin{center}
		\includegraphics[width=0.85\columnwidth]{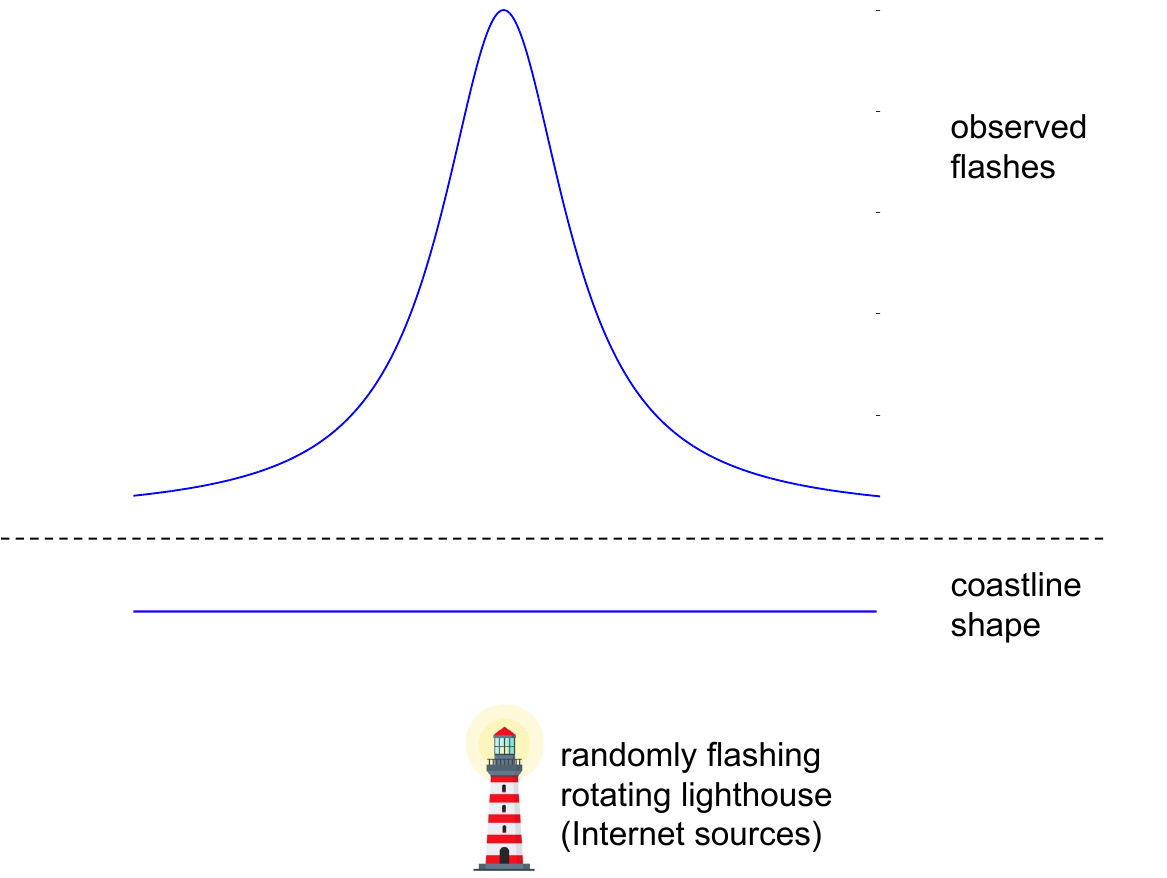}
	\end{center}
	\caption{Gull lighthouse problem predicts the distribution of flashes seen at any point on a coastline by a randomly flashing lighthouse.}
	\label{GullLighthouseProblem}
\end{figure}

\begin{figure}[htbp]
	\begin{center}
		\includegraphics[width=0.65\columnwidth]{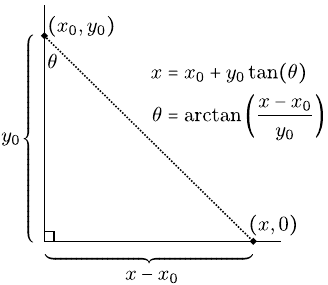}
	\end{center}
	\caption{Geometric setup for a straight coastline ($y = 0$) with respect to a lighthouse located at $(x_0, y_0)$. The azimuth $\theta$ and the corresponding coastline position $(x, 0)$ are related by the equations $x = x_0 + y_0 \tan(\theta)$ and $\theta = \arctan((x - x_0)/y_0)$.}
	\label{cauchy setup figure}
\end{figure}

\section{Coastlines to Probability Distributions} \label{from coastlines to probability distributions section}

A geometric derivation can be carried out similar to that of the lighthouse-coastline interpretation of the Cauchy distribution that generalizes the situation to allow for a much wider class of coastlines. Namely, we require that a real-valued function $f$ of a real variable satisfy the following condition for the corresponding probability distribution to be defined.

\begin{definition}[coastline condition]
	A differentiable function $f \colon (a, b) \to \mathbb{R}$ (where $-\infty \leq a < b \leq \infty$) satisfies the \emph{coastline condition} with respect to a point $(0, y_0) \in \mathbb{R}^2$ if:
	\begin{itemize}
		\item $(0, y_0)$ does not belong to the graph of $f$.
		\item The function $\theta \colon (a, b) \to (-\pi, \pi)$ defined by
		\begin{equation*}
			\theta(x) \coloneq \begin{cases} \arctan\left(\dfrac{x}{y_0 - f(x)}\right) & \text{if $y_0 \geq f(x)$,} \\ \arctan\left(\dfrac{x}{f(x) - y_0}\right) & \text{otherwise,} \\ \quad + \sgn(x) \dfrac{\pi}{2} & {} \end{cases}
		\end{equation*}
		is injective.
	\end{itemize}
		In this case, the limits $\lim_{x \to a^+}{\theta(x)}$ and $\lim_{x \to b^-}{\theta(x)}$ are the \emph{azimuthal bounds of $f$}.\footnote{The limits $\lim_{x \to a^+}{\theta(x)}$ and $\lim_{x \to b^-}{\theta(x)}$ automatically exist by the Monotone Convergence Theorem as $\theta$ is bounded and monotonic ($\theta$ is continuous since $f$ is differentiable so injectivity implies monotonicity).}
\end{definition}

Given a point $(0, y_0) \in \mathbb{R}^2$ as a stand-in for our lighthouse and a coastline $f \colon (a, b) \to \mathbb{R}$ satisfying the coastline condition above, the following distribution has the geometric interpretation that it is the probability distribution for the random variable $X$ which is the $x$-coordinate of a point on the graph of $f$ determined uniquely from an azimuthal angle $\Theta$ chosen uniformly at random from some azimuthal bounds. See Appendix A for more details about the derivation.

\begin{definition}[generalized Cauchy distribution]
	Given a point $(0, y_0) \in \mathbb{R}^2$ and a differentiable function $f \colon (a, b) \to \mathbb{R}$ satisfying the coastline condition with respect to $(0, y_0)$, a random variable is $\mathrm{GenCauchy}(0, y_0; f)$ if 
	\begin{equation} \label{generalized lighthouse distribution}
		p_X(x) = \frac{1}{\beta - \alpha} \frac{(y_0 - f(x)) + x f'(x)}{x^2 + (y_0 - f(x))^2}
	\end{equation}
	where $-\pi \leq \alpha < \beta \leq \pi$ are the azimuthal bounds of $f$.
\end{definition}

As a sanity check we can show that Equation~\ref{generalized lighthouse distribution} is consistent with the motivating example of the Cauchy distribution. Substituting the flat coastline $f(x) \coloneq 0$, azimuthal bounds $-\pi/2 < \theta < \pi/2$, and lighthouse position $(0, y_0) = (0, 1)$ in Equation~\ref{generalized lighthouse distribution} yields $p_X(x) = \frac{1}{\pi} \cdot \frac{1}{1 + x^2}$, as desired.

As an additional example, consider the case of a lower unit semicircle coastline centered at the lighthouse position $(0, y_0) \coloneq (0, 1)$, so that $f \colon [-1, 1] \to \mathbb{R}$ is defined by $f(x) \coloneq 1 - \sqrt{1 - x^2}$ and azimuthal bounds $-\pi/2 \leq \theta \leq \pi/2$. Then 
\begin{align*}
	p_X(x) & = \frac{1}{\pi} \cdot \frac{1 - (1 - \sqrt{1 - x^2}) + x \cdot \frac{-x}{\sqrt{1 - x^2}}}{x^2 + (1 - (1 - \sqrt{1 - x^2}))^2} \\
	& = \frac{1}{\pi} \cdot \left(\sqrt{1 - x^2} - \frac{x^2}{\sqrt{1 - x^2}}\right) \\
	& = \frac{1}{\pi} \cdot \frac{1}{\sqrt{1 - x^2}}
\end{align*}
Another characterization of this distribution $\mathrm{GenCauchy}(0, 1; \text{semicircle})$ is as an arcsine-distributed random variable supported on $[-1, 1]$, $\mathrm{Arcsine}(1, 1)$, and agrees with existing geometric interpretations of arcsine-distributed random variables in terms of the $x$-coordinate of a point uniformly chosen from the circumference of a circle.

\section{Probability Distributions to Coastlines} \label{from probability distributions to coastlines section}

Given a continuous probability distribution with probability density function $p$ supported on the interval $I$, when can we express $p$ in the form
\begin{equation*}
	p(x) \propto \frac{(y_0 - f(x)) + x f'(x)}{x^2 + (f(x) - y_0)^2}
\end{equation*}
for some appropriate coastline $f$ and lighthouse position $(0, y_0)$? One method to do so makes use of the fact that (fixing $(0, y_0) \in \mathbb{R}$ and $\alpha, \beta \in (-\pi, \pi)$) the existence of such an $f$ necessitates the existence of a solution to the first-order differential equation
\begin{equation} \label{main differential equation explicit form}
	y' = \frac{1}{x} \left(y - (\beta - \alpha) p(x) (x^2 + y^2)\right)
\end{equation}
where $y = y_0 - f(x)$. There are several difficulties with this approach:
\begin{itemize}
	\item Equation~\ref{main differential equation explicit form} is a nonexact, nonlinear ODE.
	\item The function 
	\begin{equation*}
		(x, y) \mapsto \frac{1}{x} \left(y - (\beta - \alpha) p(x) (x^2 + y^2)\right)
	\end{equation*}
	has an essential discontinuity on the line $x = 0$, preventing standard existence theorems like the Carath\'{e}odory's Existence Theorem from implying the existence of local solutions on neighborhoods intersecting the line $x = 0$ which is problematic when $0 \in (\alpha, \beta)$. 
	\item Standard techniques like first order finite differences and Runge-Kutta do not seem to address the general case well.
\end{itemize}
Even in the case where local solutions can be shown to exist, to have a complete geometric interpretation of the given probability distribution we must have a \emph{global} solution $y = f(x)$ supported on the same interval $I$ upon which our probability distribution is supported. Moreover, $f$ must be such that the rays from the lighthouse position $(0, y_0)$ corresponding to the azimuths $\alpha$ and $\beta$ intersect the graph of $f$ at the endpoints of $I$. 

Instead, a numerical approach making use of a geometric heuristic is taken. First, some geometric setup: Suppose a coastline $f$ is known along with a lighthouse position $(0, y_0)$ and azimuthal bounds $\alpha \leq \theta \leq \beta$. Given $\theta \in (\alpha, \beta)$, let:
\begin{itemize}
	\item $P = (x, f(x))$ is the point on the coastline $f$ corresponding to the azimuth $\theta$;
	\item $\ell$ is the line tangent to $f$ at $P$;
	\item $\varphi \in [0, \pi)$ is the measure of the angle swept from the ray $\overrightarrow{LP}$ to $\ell$, where $L = (0, y_0)$;
	\item $\omega \in [0, \pi/2]$ is the measure of the acute angle formed between $\ell$ and the $x$-axis when $\ell$ is not horizontal and $\omega = 0$ when $\ell$ is horizontal.
\end{itemize}
Figure~\ref{heuristic setup figure} illustrates this setup in the case where $\ell$ has positive slope and $P$ is in the upper-half-plane. When $\ell$ has positive slope, we may calculate $\omega = \theta + \varphi - \frac{\pi}{2}$ while when $\ell$ has negative slope we instead have $\omega = \frac{\pi}{2} - \theta - \varphi$. When $\ell$ has positive slope then $f'(x) = \tan(\omega)$; when $\ell$ has negative slope then $f'(x) = -\tan(\omega) = \tan(-\omega)$; finally, when $\ell$ has zero slope then $\theta = \frac{\pi}{2} - \varphi$ so $\tan(\theta + \varphi - \frac{\pi}{2}) = \tan(0) = 0 = f'(x)$. In all cases we find that 
\begin{equation*}
	f'(x) = \tan\left(\theta + \varphi - \frac{\pi}{2}\right)
\end{equation*}
meaning that for $0 < \delta$ small
\begin{equation*}
	f(x + \delta) \approx f(x) + \delta \tan\left(\theta + \varphi - \frac{\pi}{2}\right)
\end{equation*}

\begin{figure}[htbp]
	\begin{center}
		\includegraphics[width=0.65\columnwidth]{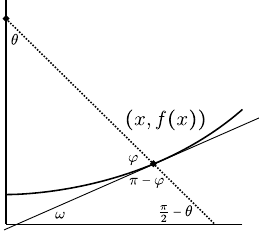}
	\end{center}
	\caption{Illustration of the angle measures identified for use in the sine-squared heuristic $p(x) \propto \sin(\varphi(x))^2$, where $x$ is the first coordinate of the point on the coastline $y = f(x)$ corresponding to the azimuth $\theta$. The remaining angle measures $\varphi, \omega$ are taken with respect to the line tangent to $y = f(x)$ at $(x, f(x))$.}
	\label{heuristic setup figure}
\end{figure}

With this setup in mind, we may approximate $f$ by making the assertion
\begin{equation*}
	p(x) = \beta^{-1} \sin(\varphi(x))^2
\end{equation*}
for some $\beta > 0$, or, since $\varphi \in [0, \pi]$,
\begin{equation} \label{geometric heuristic}
	\varphi(x) = \arcsin\left(\sqrt{\beta p(x)}\right) \tag*{$(\ast)$}
\end{equation}
and iteratively calculate $f(x_i + \delta_i)$ from $f(x_i)$ and step-sizes $\delta_i \coloneq x_{i+1} - x_i$. Algorithm~\ref{heuristic algorithm} computes a sequence $(f_i)_{i = 1}^N$ from a predetermined lighthouse position $(0, y_0)$, a probability density function $p$ supported on $[a, b]$, and a partition $a = x_1 < x_2 < \cdots < x_N = b$ of $[a, b]$; the sequence $(f_i)_{i = 1}^N$ approximates a coastline function $f \colon [a, b] \to \mathbb{R}$ with $f(x_i) \coloneq f_i$ for $i = 1, 2, \ldots, N$. Appendix B
examines some of the limitations of this sine-squared heuristic.

\begin{algorithm}
	\SetKwInOut{Input}{input}\SetKwInOut{Output}{output}
	\Input{lighthouse position $(0, y_0)$\; \\ $\beta$ a proportionality constant\; \\ $p$ a probability density function supported on $[a, b]$\; \\ $x = (x_i)_{i = 1}^N \in [a, b]^N$ a strictly increasing sequence with $x_1 = a$ and $x_N = b$\;}
	\Output{$f = (f_i)_{i=1}^N$ approximate coastline $y$-coordinates\;}
	\BlankLine
	initialization\;
	$\delta \leftarrow (x_{i+1} - x_{i})_{i=1}^{N-1}$\;
	$\varphi \leftarrow \left(\arcsin(\sqrt{\beta p(x_{i})})\right)_{i=1}^N$\;
	$f \leftarrow (0)_{i=1}^N$\;
	$\theta \leftarrow (0)_{i=1}^{N-1}$\;
	\For{$i \leftarrow 1$ \KwTo $N-1$}{
		\eIf{$f_i \leq y_0$}{
			$\displaystyle \theta_i \leftarrow \arctan\left(\frac{x_i}{y_0 - f_i}\right)$\;
		}{
			$\displaystyle \theta_i \leftarrow \arctan\left(\frac{f_i - y_0}{x_i}\right) + \sgn(x_i) \frac{\pi}{2}$\;
		}
		$\displaystyle f_{i+1} \leftarrow f_i + \delta_i \tan\left(\theta_i + \varphi_i - \frac{\pi}{2}\right)$\;
	}
	\caption{Approximation of coastline function $f$ from given probability density function $p \colon [a, b] \to \mathbb{R}$ on partition $a = x_1 < x_2 < \cdots < x_N = b$ using the sine-squared heuristic $\varphi(x) = \arcsin(\sqrt{\beta p(x)})$. A lighthouse position $(0, y_0)$ and a proportionality constant $\beta$ must be defined ahead of time and the value $f(0)$ is implicitly set to $0$.} \label{heuristic algorithm}
\end{algorithm}

Applying this sine-squared heuristic to the self-correlations shown in Figure~\ref{self correlation by classification total figure} produces the coastlines seen in Figure~\ref{caida coastlines figure}. This geometric representation of the  CAIDA and GreyNoise source correlations for different CAIDA collection times and GreyNoise classifications illustrate the significant geometric differences between coastlines of these observations. Most apparent is the stark difference between sources classified by GreyNoise as benign  and all other sources.  The 100x separation seen in benign coastlines provides a potentially useful geometric interpretation of this network traffic.

\begin{figure*}[htbp]
	\begin{center}
		\includegraphics[width=1.7\columnwidth]{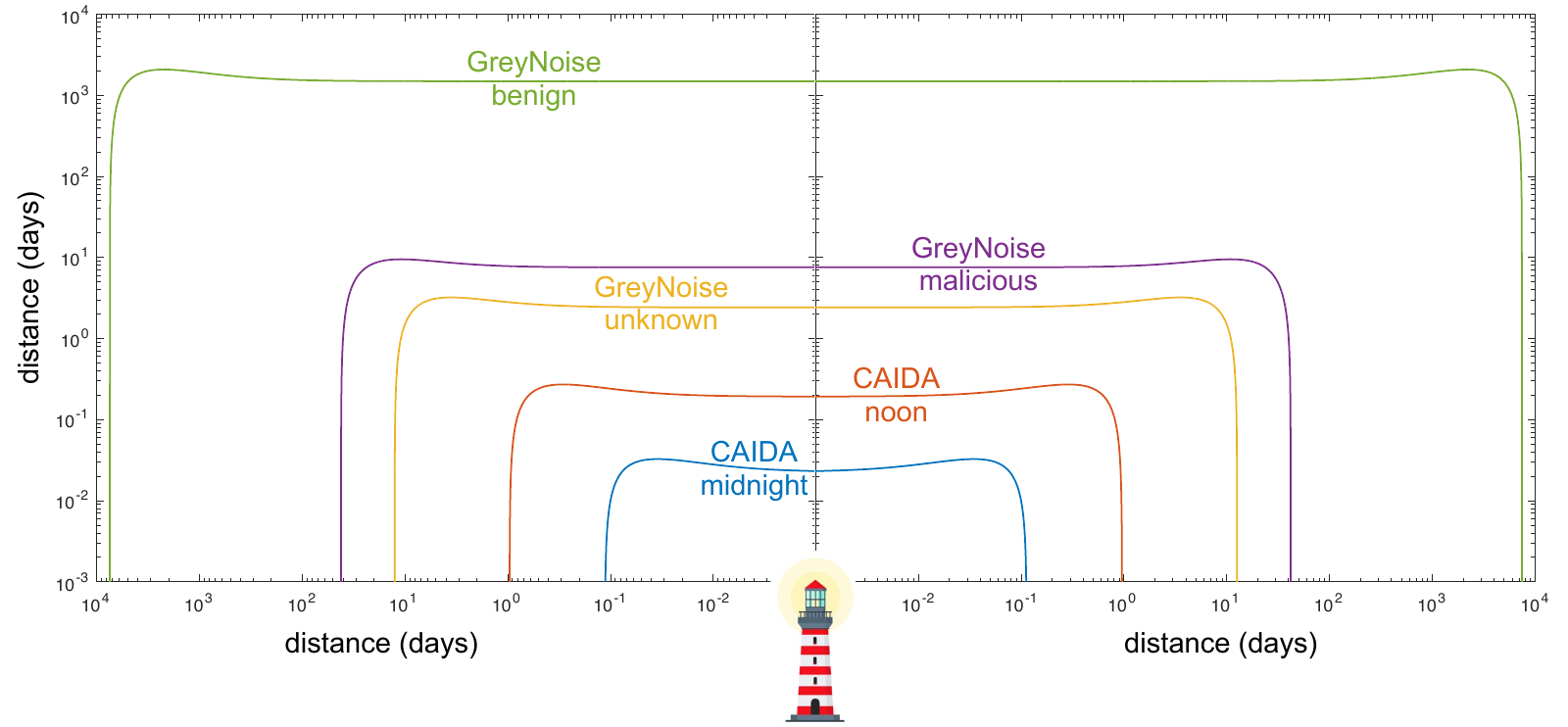}
	\end{center}
	\caption{Coastlines generated for the modified Cauchy distributions associated with CAIDA noon/midnight and GreyNoise malicious/unknown/benign observations. The over 100x separation between GreyNoise benign and the other coastlines provides a potentially useful geometric interpretation of this network traffic.} \label{caida coastlines figure}
\end{figure*}

\section{Conclusion \& Future Work} \label{conclusion and future work section}

Large scale network observations provide a unique window on the Internet.  Newly developed GraphBLAS hypersparse matrices and D4M associative array technologies enable the efficient anonymized analysis of these data at large scale.   This work analyzes over 100,000,000,000 anonymized packets from the largest Internet telescope (CAIDA) and over 10,000,000 anonymized sources from the largest commercial honeyfarm (GreyNoise).   Extending the geometric interpretation of the Cauchy distribution in terms of lighthouses and coastlines provides new geometric interpretations of a much wider class of distributions than seen previously. This paves the road towards further geometric tools for studying fields based on their statistical distributions, with a particular eye towards temporal correlations in the cyber-sphere where it has been consistently observed that Cauchy-like probability distributions (so-called modified Cauchy distributions) are ubiquitous. We have examined the mathematical difficulties inherent in this approach, namely in the direction of finding an appropriate coastline function given a probability density function, and suggested and evaluated a geometric heuristic for numerically solving this problem.  Applying this heuristic to CAIDA and GreyNoise reveals 100x separation of benign coastlines which may be a potentially useful geometric interpretation of this network traffic.

Future work will focus on further refining the sine-squared heuristic, including the identification of properties of probability density functions that might lead to better or worse performance such as first and second derivative signs and determination of best practices addressing those properties. Further examination of the first-order ordinary differential equation Equation~\ref{main differential equation explicit form} is also needed in order to determine whether any existing theory guarantees existence and/or uniqueness of local solutions and under what conditions (if any) local existence and uniqueness can be extended to global existence and uniqueness. Likewise, further exploration of the landscape of numerical techniques for solving ill-behaved ordinary differential equations might lead to alternative numerical methods for solving the aforementioned problem of finding a coastline function given a probability density function.

\section*{Acknowledgment}

The authors wish to acknowledge the following individuals for their contributions and support: Daniel Andersen, Sean Atkins, Chris Birardi, Bob Bond, Andy Bowne, Stephen Buckley, K Claffy, Cary Conrad, Chris Demchak, Alan Edelman, Garry Floyd, Jeff Gottschalk, Dhruv Gupta, Chris Hill, Charles Leiserson, Kirsten Malvey, Chase Milner, Sanjeev Mohindra,  Dave Martinez, Joseph McDonald, Heidi Perry, Christian Prothmann, Steve Rejto, Josh Rountree, Daniela Rus, Mark Sherman, Scott Weed, Adam Wierman, Marc Zissman.

\bibliographystyle{IEEEtran}
\bibliography{cypher_bib}

\section*{Appendix A: Generalized Cauchy Distribution}

\label{derivation of generalized cauchy distribution appendix}

Let---as in the derivation of the Cauchy distribution---$X$ and $\Theta$ be random variables for the first coordinate of the coastline position and the azimuth, respectively. This implies
\begin{align*}
	p_X(x) & = p_\Theta(\theta(x)) \left|\frac{d\theta(x)}{dx}\right| \tag*{} \\
	& = \frac{1}{\beta - \alpha} \cdot \frac{(y_0 - f(x)) + x f'(x)}{x^2 + (f(x) - y_0)^2}, \label{generalized lighthouse distribution}
\end{align*}
when the coastline is the image of a real-valued function of a real variable $f$.  Given the lighthouse is at position $(0, y_0)$, if for each azimuth $\theta \in (\alpha, \beta)$ the flash is seen on the coastline at a distance of $r(\theta)$ away from the lighthouse's position, then we have
\begin{align*}
	x & = r(\theta) \sin(\theta), \\
	y & = y_0 - r(\theta) \cos(\theta)
\end{align*}
As such, $x^2 + (y_0 - y)^2 = r(\theta)^2$, hence $r(\theta) = \sqrt{x^2 + (y_0 - y)^2}$ since $r(\theta) \geq 0$ for all values of $\theta$. With this in mind, we may solve for $\theta$ to find
\begin{equation*}
	\theta = \arccos\left(\frac{y_0 - y}{\sqrt{(y_0 - y)^2 + x^2}}\right)
\end{equation*}
For brevity, let $\hat{y} \coloneq y_0 - y$, noting that $\hat{y}' = \frac{d\hat{y}}{dx} = -\frac{dy}{dx}$. Then
\begin{align*}
	\frac{d\theta}{dx} & = -\frac{\frac{\hat{y}' \sqrt{x^2 + \hat{y}^2} - \hat{y} \frac{2 x x' + 2 \hat{y} \hat{y}'}{2\sqrt{x^2 + \hat{y}^2}}}{x^2 + \hat{y}^2}}{\sqrt{1 - \left(\frac{\hat{y}}{\sqrt{x^2 + \hat{y}^2}}\right)^2}} \\
	& = -\frac{\frac{\hat{y}' (x^2 + \hat{y}^2) - \hat{y} (x + \hat{y} \hat{y}')}{(x^2 + \hat{y}^2)^{3/2}}}{\sqrt{\frac{x^2 + \hat{y}^2}{x^2 + \hat{y}^2} - \frac{\hat{y}^2}{x^2 + \hat{y}^2}}} \\
	& = -\frac{x^2 \hat{y}' + \hat{y}^2 \hat{y}' - x \hat{y} - \hat{y}^2 \hat{y}'}{(x^2 + \hat{y}^2)^{3/2} \sqrt{\frac{x^2}{x^2 + \hat{y}^2}}} \\
	& = -\frac{x}{|x|} \frac{x \hat{y}' - \hat{y}}{x^2 + \hat{y}^2} \\
	& = \sgn(x) \frac{(y_0 - y) + x y'}{x^2 + (y - y_0)^2}
\end{align*}

\section*{Appendix B: Limitations of Sine-Squared Heuristic}
\label{limitations of sine-square heuristic appendix}

As a mechanism for testing the applicability of our heuristic \ref{geometric heuristic}, Figure~\ref{original pdfs versus regenerated pdfs figure} (top) plots the probability density functions of the Gaussian distribution ($\mathrm{Gaussian}(0, 1)$), the Cauchy distribution ($\mathrm{Cauchy}(1)$), the modified Cauchy distribution ($\mathrm{ModCauchy}(1, 3/4)$), and the generalized Cauchy distributions obtained by using as coastline the lower unit semicircle centered at $(0, 1)$, the $45^\circ$ line $y = x$, and the $135^\circ$ line $y = -x$. In addition, the generalized Cauchy distributions obtained by using the lighthouse position $(0, 1)$ and coastlines generated from the sine-squared heuristic \ref{geometric heuristic} are plotted against the original distributions. The values of $\beta$ used within the sine-squared heuristic have been chosen between $0$ and $5$ to minimize the L2 norm between the original and re-obtained probability density functions. We observed that the heuristic works best when the probability density function $p$ is decreasing, so for the generalized Cauchy distribution corresponding to the lower unit semicircle centered at $(0, 1)$ the heuristic was applied on the interval $[-0.9, 0]$ and flipped horizontally across the $y$-axis. A similar approach was needed for the generalized Cauchy distributions corresponding to the coastlines $y = x$ and $y = -x$ by separating the intervals $[-5, 1/2]$, $[1/2, 5]$ for the former ($(1/2, 1/2)$ is the closest point on $y = x$ to $(0, 1)$) and $[-5, -1/2]$, $[-1/2, 5]$ for the latter (similarly, $(-1/2, 1/2)$ is the closest point on $y = -x$ to $(0, 1)$), though it was necessary to explicitly set the value of $f_1$ prior to the recursion in Algorithm~\ref{heuristic algorithm} since otherwise the algorithm sets $f_1 = 0$ which would lead to a discontinuous coastline.

For appropriately selected $\beta$, the heuristic qualitatively re-obtained the probability density functions in all cases, and obtained very closely matched quantitative results in the cases of $\mathrm{Gaussian}(0, 1)$, $\mathrm{Cauchy}(1)$, $\mathrm{GenCauchy}(0, 1; y=x)$, and $\mathrm{GenCauchy}(0, 1; y=-x)$. The quantitative differences in the heuristic re-obtained probability density function for $\mathrm{ModCauchy}(1, 3/4)$ are sufficiently small as to not effective the overall geometric interpretation of the results in Figure~\ref{caida coastlines figure}. The decay rates were roughly proportional and might be attributable to normalization difficulties since $\mathrm{ModCauchy}(1, 3/4)$ is not supported on all of $\mathbb{R}$ and hence normalization constants depend upon the arbitrarily declared domain. The larger differences seen in the $\mathrm{GenCauchy}(0, 1; \text{semicircle})$ probability density function are perhaps due to its high positive concavity and warrant further investigation. 

\begin{figure}[htbp]
	\begin{center}
		\includegraphics[width=1.0\columnwidth]{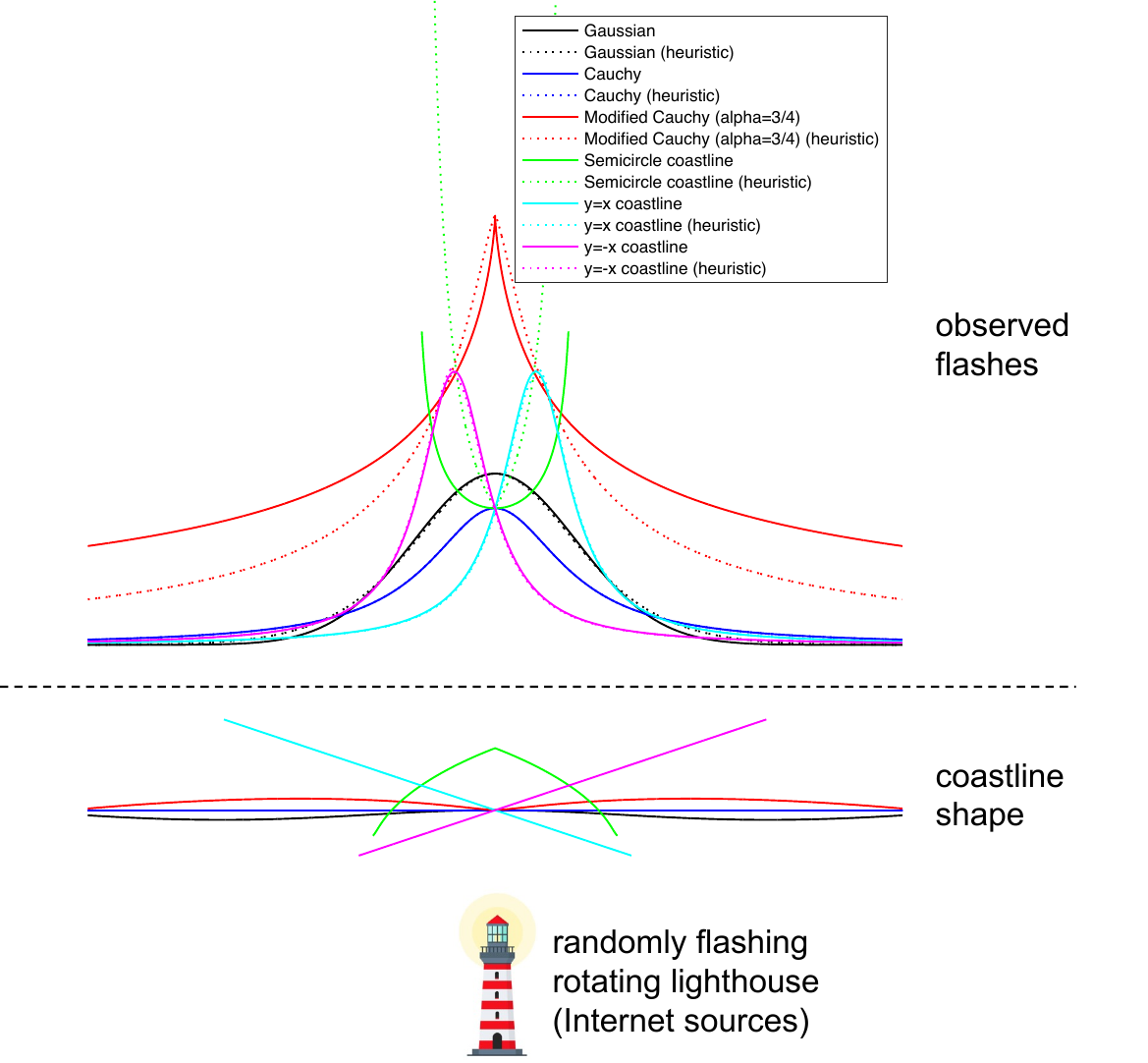}
	\end{center}
	\caption{ (top) Plots of distributions $\mathrm{Gaussian}(0, 1)$, $\mathrm{Cauchy}(1)$, $\mathrm{ModCauchy}(1, 3/4)$, $\mathrm{GenCauchy}(0, 1; \text{semicircle})$, $\mathrm{GenCauchy}(0, 1; y=x)$, and $\mathrm{GenCauchy}(0, 1; y=-x)$ (solid plots) versus the generalized Cauchy distributions obtained by using the lighthouse position $(0, 1)$ and coastlines generated from the sine-squared heuristic \ref{geometric heuristic} (dashed plots with same color as the corresponding original distribution). The values of the proportionality constant $\beta$ utilized in the heuristic \ref{geometric heuristic} were chosen beween $0$ and $5$ to minimize L2 norm after normalizing the generated generalized Cauchy distributions to agree with their corresponding original distributions at the minimum $x$ value. (bottom) Plots of the coastlines generated from the solid-line probability distributions and generating the dashed-line probability distributions illustrated in Figure~\ref{original pdfs versus regenerated pdfs figure}, matched by color.} \label{heuristic coastlines figure.} \label{original pdfs versus regenerated pdfs figure}
\end{figure}

\end{document}